\newtheorem{definition}{Definition}
\newtheorem{theorem}[definition]{Theorem}
\newtheorem{lemma}[definition]{Lemma}
\theoremstyle{definition}
\title{Computing the nucleolus of weighted voting games}
\author{Edith Elkind and Dmitrii Pasechnik}
\newcommand\veceta{{\boldsymbol{\eta}}}
\newcommand\w{{\mathbf w}}
\newcommand\p{{\mathbf p}}
\newcommand\q{{\mathbf q}}
\newcommand\x{{\mathbf x}}
\renewcommand\d{{\mathbf d}}
\renewcommand\P{{\mathcal P}}
\renewcommand\S{{\mathcal S}}
\newcommand\LP{{\mathcal{LP}}}
\newcommand\eps{{\varepsilon}}
\newcommand{\aaai}{wvg07}
\newcommand{\gls}{MR1261419}
\newcommand{\schrlp}{MR874114}
\newcommand{\schmeidler}{MR0260432}
\newcommand{\matsui}{MR1768387}
\newcommand{\dengpapa}{deng:94a}
\newcommand{\deng}{MR2368807}
\newcommand{\cyclic}{MR2101908}
\newcommand{\nunes}{MR2109597}
\newcommand{\wkern}{MR1980665}
\newcommand{\tijs}{MR1952719}
\newcommand{\shapleyshubik}{MR989821}
\begin{document}

\begin{abstract}
Weighted voting games (WVG) are coalitional games 
in which an agent's contribution to a coalition is given by his {\it 
weight}, and a coalition wins if its total weight meets or exceeds 
a given quota. These games model decision-making in 
political bodies as well as collaboration and surplus division 
in multiagent domains. The computational complexity of 
various solution concepts for weighted voting games received 
a lot of attention in recent years. In particular, Elkind 
et al.(2007) studied the complexity of stability-related solution 
concepts in WVGs, namely, of the core, the least 
core, and the nucleolus. While they have completely characterized 
the algorithmic complexity of the core and the least 
core, for the nucleolus they have only provided an NP-hardness result.
In this paper, we solve an open problem posed by Elkind et al. by showing 
that the nucleolus of WVGs, and, more generally, 
$k$-vector weighted voting games with fixed $k$, can be computed in 
pseudopolynomial time, i.e., there exists an algorithm that 
correctly computes the nucleolus and runs in time 
polynomial in the number of players $n$ and the 
maximum weight $W$. In doing so, we propose a general framework for 
computing the nucleolus, which may be applicable to a wider of class of games.
 \end{abstract}


\section{Introduction}
Both in human societies and in multi-agent systems, 
there are many situations where individual agents can achieve their goals
more efficiently (or at all) by working together. This type of scenarios
is studied by {\em coalitional game theory}, which provides tools 
to decide which teams of agents will form and how they will divide the
resulting profit. In general, to describe a coalitional game, one has to specify 
the payoff available to every team, i.e., every possible subset of agents.
The size of such representation is exponential in the number of agents, 
and therefore working with a game given in such form is computationally 
intensive. For this reason, a lot of research effort has been spent on identifying
and studying classes of coalitional games that correspond to rich and practically 
interesting classes of problems and yet have a compact representation.

One such class of coalitional games is
{\em weighted voting games}, in which an 
agent's contribution to a coalition is given by his {\it weight}, 
and a coalition has value 1 if its total weight meets or exceeds a given quota, 
and 0 otherwise. These games model decision-making in political bodies, 
where agents correspond to political parties 
and the weight of each party is the number of its supporters, 
as well as task allocation in multi-agent systems, 
where the weight of each agent is the amount of resources
it brings to the table and the quota is the total amount of resources
needed to execute a task.

An important issue in coalitional games is {\it surplus division}, 
i.e., distributing the value of the resulting coalition
between its members in a manner that encourages cooperation. 
In particular, it may be 
desirable that all agents work together, i.e., form the 
{\em grand coalition}. In this case, a natural goal is to distribute 
the payoff of the grand coalition so that it remains {\em stable}, 
i.e., so as to minimize the incentive 
for groups of agents to deviate and form coalitions of their own.
Formally, this intuition is captured by several related solution
concepts, such as the core, the least core, and the nucleolus.
Without going into the technical details of their definitions 
(see Section~\ref{sec:prelim}), the nucleolus is, in some sense, 
the most stable payoff allocation scheme, and as such it 
is particularly desirable when the stability of the grand coalition is important.

The stability-related solution concepts for WVGs
have been studied 
from computational perspective in~\cite{\aaai}. There, 
the authors show that while computing the core is easy, finding the least
core and the nucleolus is NP-hard. These computational hardness results
rely on all weights being given in binary, which suggests that these 
problems may be easier for polynomially bounded weights.
Indeed, paper~\cite{\aaai} provides a pseudopolynomial time
algorithm (i.e., an algorithm whose running time is polynomial
in the number of players $n$ and the maximal weight $W$) for the least core. 
However, an analogous question for the nucleolus has been left open.

In this paper, we answer this question in affirmative 
by presenting a pseudopolynomial
time algorithm for computing the nucleolus.
\begin{theorem}\label{thm:main}
For a WVG specified by integer weights $w_1,\dots,w_n$ and a quota $q$, 
there exists a procedure
that computes its nucleolus in time polynomial in $n$ and $W=\max_i w_i$.
\end{theorem}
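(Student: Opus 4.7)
The plan is to implement an iterative LP-based scheme in the spirit of Maschler--Peleg--Shapley. Define $\L_0 = \{x \in \mathbb{R}^n : x(N) = v(N)\}$, an affine subspace of dimension $n-1$. At iteration $k \geq 1$, given an affine subspace $\L_{k-1} \subseteq \mathbb{R}^n$ described by at most $n$ explicit linear equations, solve
\[
\LP_k:\quad \min \eps \ \text{ s.t. } \ x \in \L_{k-1},\ \ x(S) + \eps \ge v(S) \text{ for all } \emptyset \ne S \ne N.
\]
Let $\eps_k$ be its optimum, and define $\L_k \subseteq \L_{k-1}$ as the affine hull of the $x$-projection of the optimal face. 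A standard argument shows $\dim \L_k < \dim \L_{k-1}$ whenever $\dim \L_{k-1} \ge 1$: otherwise $\L_k$ would have a relative interior point within $\L_{k-1}$ at which every inequality is strict, contradicting optimality of $\eps_k$. Hence the process halts in at most $n-1$ outer iterations with $\L_k$ a single point, which is the nucleolus.

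The first ingredient is a pseudopolynomial separation oracle for $\LP_k$. Given a candidate $(x,\eps)$, a coalition constraint is violated iff some $S$ satisfies $x(S) < v(S) - \eps$. Since $v(S)\in\{0,1\}$, it suffices to solve two knapsack-type subproblems: over losing coalitions $w(S) < q$, find one minimizing $x(S)$; over winning coalitions $w(S) \ge q$, find one minimizing $x(S)$. Both are handled by a standard dynamic program indexed by prefix and total weight in time $O(n^2 W)$. Together with the $O(n)$ explicit equations of $\L_{k-1}$, this oracle drives the ellipsoid method to solve $\LP_k$ in pseudopolynomial time, with standard rounding to an exact rational optimum since all input data are integers.

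The second and main challenge is to recover $\L_k$ from $\LP_k$ without enumerating tight coalitions, of which there may be exponentially many. I would do this via auxiliary LP queries: compute a basis $d_1, \dots, d_r$ of the linear part of $\L_{k-1}$ from its explicit equations ($r = \dim \L_{k-1} \le n-1$). For each $d_j$, solve two LPs that maximize and minimize $d_j^\top x$ subject to $x \in \L_{k-1}$ and $x(S) + \eps_k \ge v(S)$ for all $S$; these use the same separation oracle and so run in pseudopolynomial time. If the two optima agree at some value $c_j$, the equation $d_j^\top x = c_j$ must hold on the optimal face and is appended to the defining system of $\L_k$; otherwise $d_j$ remains a free direction. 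The resulting equations, together with those of $\L_{k-1}$, cut out $\L_k$ exactly.

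The main obstacle is exactly this pseudopolynomial extraction of $\L_k$; the direction-test strategy above sidesteps the exponential size of the tight-coalition set by issuing only $O(n)$ LP queries per iteration rather than enumerating tight $S$. Combining at most $n-1$ outer iterations, $O(n)$ LP queries per iteration, and a pseudopolynomial solver for each LP yields total running time polynomial in $n$ and $W$, establishing Theorem~\ref{thm:main}. The same framework, with $k$-dimensional weight vectors replacing scalar weights in the DP, will handle the stated generalization to $k$-vector WVGs for any fixed $k$.
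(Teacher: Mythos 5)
There is a genuine gap, and it sits exactly where the paper's main technical contribution lies. First, your outer iteration stalls for $k\ge 2$: you retain the inequality $x(S)+\eps\ge v(S)$ for \emph{every} coalition at every stage. But the coalitions that are tight on the entire optimal face of $\LP_1$ (there is at least one, else $\eps_1$ would not be optimal) satisfy $x(S)=v(S)-\eps_1$ identically on the affine hull $\L_1$, so on $\L_1$ their constraints read $\eps\ge\eps_1$. Hence your $\LP_2$ has optimum $\eps_2=\eps_1$ and the same optimal face, so $\L_2=\L_1$ and no dimension is lost; the relative-interior argument you invoke fails because those particular inequalities can never be strict on $\L_1$. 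The correct scheme (Kopelowitz's, which the paper follows) must \emph{remove} the previously tight coalitions from the inequality set, replacing them by the equations $p(S)=\nu(S)-\eps^t$ for $S\in\Sigma^t$; only then does $\eps$ strictly decrease and the dimension drop.

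Once that is fixed, the central difficulty appears and your proposal has no answer to it: the excluded family $\cup_{t<k}\Sigma^t$ is implicitly defined and may be exponentially large, so the oracle for $\LP_k$ must find a violated inequality \emph{among the non-excluded coalitions only}. Your knapsack DP returns the coalition of minimum $x(S)$ overall, i.e.\ the maximum deficit; for $k\ge2$ that maximum equals $\eps_1>\eps_k$ and is attained by coalitions in $\Sigma^1$, which do not correspond to violated constraints. This is precisely the obstacle the paper overcomes: its oracle uses dynamic programming to compute the top $j$ \emph{distinct} deficits $m^1,\dots,m^j$ under $\p$ together with the multiplicities $n^1,\dots,n^j$, stores only the values $\eps^t$ and the sizes $s^t=|\Sigma^t|$, and accepts iff $m^t=\eps^t$ and $n^t=s^t$ for $t<j$ and $m^j\le\eps$; Lemma~\ref{induction} shows that matching these counts forces the tight coalitions to coincide with the sets $\Sigma^t$, so membership in $\Sigma^t$ never has to be tested explicitly. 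Your direction-testing step for recovering the affine hull is sound in itself (it is essentially Thm.~6.5.5 of~\cite{\gls}, which the paper also relies on), but it addresses a secondary issue; without the counting idea the separation oracle for $\LP_k$, $k\ge2$, is unavailable and the argument does not go through.
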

As in many practical scenarios
(such as e.g., decision-making in political bodies) the weights are likely 
to be not too large, this provides a viable algorithmic solution 
to the problem of finding the nucleolus. Our approach relies
on solving successive exponential-sized linear programs by constructing
dynamic-programming based separation oracles, a technique that may prove
useful in other applications.

A proof of Theorem~\ref{thm:main} is presented in Section~\ref{sec:algo}, 
after preliminaries in Section~\ref{sec:prelim} and a discussion of related 
work in Section~\ref{sec:relwork}. The text rounds up by 
Section~\ref{sec:conclusion} that discusses conclusions and future work
directions.

\section{Related work}\label{sec:relwork}
Another approach to payoff distribution in weighted voting games is based
on {\em fairness}, i.e., dividing the payoff in a manner that is proportional
to the agent's influence. The most popular solution concepts used
in this context are the Shapley--Shubik power index~\cite{\shapleyshubik} 
and the Banzhaf
power index~\cite{banzhaf}. 
Both of these indices are known to be computationally hard
for large weights~\cite{prasadkelly,\dengpapa}, 
yet efficiently computable for polynomially bounded weights~\cite{\matsui}.

The concept of the nucleolus was introduced by Schmeidler~\cite{\schmeidler}
in 1969. Paper~\cite{\schmeidler} explains how the nucleolus arises naturally
as ``the most stable'' payoff division scheme, 
and proves that the nucleolus is well-defined for any coalitional 
game and is unique. 
Kopelowitz~\cite{Kopel67} proposes to compute the nucleolus 
by solving a sequence of linear programs; we use this approach in our algorithm.

The computational complexity of the nucleolus has been studied 
for many classes of games, such as flow games~\cite{\deng}, 
cyclic permutation games~\cite{\cyclic}, assignment games~\cite{\nunes}, 
matching games~\cite{\wkern}, and neighbor games~\cite{\tijs}, 
as well as several others. While some of these papers provide polynomial-time
algorithms for computing the nucleolus, others contain NP-hardness results.

The work in this paper is inspired by~\cite{\aaai}, which 
shows that the least core and the nucleolus of a weighted voting game
are NP-hard to compute. It also proves that
the nucleolus cannot be approximated within any constant factor.
On the positive side, it provides a pseudopolynomial time algorithm
for computing the least core, i.e., an algorithm whose running time is polynomial
in $n$ and $W$ (rather than in the game representation size $O(n\log W$)), as well
as a fully polynomial time approximation scheme (FPTAS) for the least core.
However, for the nucleolus, paper~\cite{\aaai} contains no algorithmic results.

\section{Preliminaries and Notation}\label{sec:prelim}
A {\em coalitional game} $G=(I, \nu)$ is given by a set of agents
$I=\{1, \dots, n\}$
and a function $\nu : 2^I \to \mathbb{R}$
that maps any subset (coalition) of the agents
to a real value. This value is the total utility these agents
can guarantee to themselves when working together. 
A coalitional game is called {\em simple} if 
$\nu(S)\in\{0, 1\}$ for any coalition $S\subseteq I$.
In a simple game, a coalition $S$ is called {\em winning}
if $\nu(S)=1$ and {\em losing} otherwise.

A {\em weighted voting game} is a simple coalitional game $G$ given by a
set of agents $I=\{1, \dots, n\}$, their non-negative {\em weights}
$\w=(w_1, \dots, w_n)$, and a {\em quota} $q$; we write
$G=(I;\w; q)$.
As the focus of this paper is computational complexity
of such games, it is important to specify how the game is represented.
In what follows, we assume that the weights and the quota are integers 
given in binary. This does not restrict the class of WVGs that we can work 
with, as any weighted voting game has such a representation~\cite{muroga:71a}. 

For a coalition $S\subseteq I$, its value $\nu(S)$ is
1 (i.e., $S$ is winning)
if $\sum_{i\in S}w_i\ge q$; otherwise, $\nu(S)=0$. 
Without loss of
generality, we assume that the value of the grand coalition $I$ is 1, 
that is, $\sum_{i\in I}w_i\ge q$.
Also, we set $W=\max_{i\in I}w_i$.

For a coalitional game $G=(I, \nu)$, an {\em imputation} is a vector of
non-negative numbers $\p=(p_1, \dots, p_n)$, one for each agent
in $I$, such that $\sum_{i\in I}p_i=\nu(I)$.  We refer to $p_i$ as the
\emph{payoff} of agent~$i$.  We write $p(S)$ to denote $\sum_{i\in
S}p_i$. Similarly, $w(S)$ denotes $\sum_{i\in S}w_i$.

An important notion in coalitional games is that of stability:
intuitively, a payoff vector should distribute the gains of the grand coalition 
in such a way that no group of agents has an incentive to deviate and form a 
coalition of their own. This intuition is captured by the notion of the core:
the {\em core} of a game $G$ is the set of all imputations $\p$ such 
that 
\begin{equation}\label{core}
p(S)\ge\nu(S) \text{ for all } S\subseteq I.
\end{equation}
While the core is an appealing solution concept, it is very demanding:
indeed, for many games of interest, the core is empty. In particular, 
it is well known that in simple games 
the core is empty unless there exists a veto player, i.e., a player that   
is present in all winning coalitions. Clearly, this is not always the case
in weighted voting games, and a weaker solution concept is needed.

We can relax the notion of the core by
allowing a small error in the inequalities~(\ref{core}).  
This leads to the notion of $\eps$-core: 
the {\em $\eps$-core} of a game $G$ is the set of all imputations $\p$ 
such that $p(S)\ge\nu(S)-\eps$  for all  $S\subseteq I$.
Under an imputation in the $\eps$-core, 
the {\em deficit} of any coalition $S$, 
i.e., the difference $\nu(S)-p(S)$
between its value and the payoff that it gets, 
is at most $\eps$. Observe that if $\eps$ is large enough, e.g., 
$\eps\ge 1$, then the $\eps$-core is guaranteed to be non-empty.
Therefore, a natural goal is to identify
the smallest value of $\eps$ such that the $\eps$-core
is non-empty, i.e., to minimize the error introduced 
by relaxing the inequalities in~(\ref{core}).
This is captured by the concept
of the least core, defined as the smallest non-empty $\eps$-core
of the game.
More formally, consider the set 
$\{\eps\mid \eps\le 1, \eps\text{-core of $G$ is non-empty}\}$. 
It is easy to see that this 
set is compact, so it has a minimal element $\eps_1$. 
The {\em least core} of $G$ is its $\eps_1$-core.
The imputations in least core distribute the payoff in a way
that minimizes the incentive to deviate: under any $\p$
in the least core,  
no coalition can gain more than $\eps_1$ by deviating, 
and for any $\eps'<\eps_1$, 
there is no way to distribute the payoffs so that 
the deficit of every coalition is at most $\eps'$.
However, while the least core
minimizes the worst-case deficit, it does not attempt
to minimize the {\em number} of coalitions that experience
the worst deficit, i.e., $\eps_1$, nor does it try
to minimize the second-worst deficit, etc. The {\em nucleolus}
is a refinement of the least core that takes into account these 
higher-order effects. 

Recall that the deficit of a coalition $S$ under an imputation $\p$
is given by $d(\p, S)=\nu(S)-p(S)$. The {\em deficit vector}
of $\p$ is the vector
$\d(\p) = (d(\p,S_1), \dots, d(\p, S_{2^n}))$,
where $S_1, \dots, S_{2^n}$ is a list
of all subsets of $I$ ordered so that $d(\p, S_1)\ge d(\p,
S_2)\ge\dots\ge d(\p, S_{2^n})$. In other words, the deficit vector
lists the deficits of all coalitions from the largest to the smallest 
(which may be negative). The {\em nucleolus} is an imputation
$\veceta=(\eta_1, \dots, \eta_n)$ that satisfies 
$\d(\veceta)\le_{\mathrm{lex}}
\d(\x)$ for any other imputation $\x$, where $\le_{\mathrm{lex}}$
is the lexicographic order. It is known~\cite{\schmeidler}
that the nucleolus is well-defined
(i.e., an imputation with a lexicographically
minimal deficit vector always exists) and is unique.

\section{Algorithm}\label{sec:algo}
The description of our algorithm is structured as follows. 
We use the idea of~\cite{Kopel67}, 
which explains how to compute the nucleolus by solving a
sequence of (exponential-size) linear programs. In Section~\ref{sequence-lp}, 
we present the approach of~\cite{Kopel67}, 
and argue that it correctly computes the nucleolus. This material is not new, 
and is presented here for completeness.
In Section~\ref{counts}, we show how to design separation oracles
for the linear programs in this sequence so as to solve them 
by the ellipsoid method.
While a naive implementation of these separation oracles would require
storing exponentially many constraints, we show how to replace
explicit enumeration of these constraints with a counting subroutine, 
while preserving
the correctness of the algorithm. 
The arguments
in Sections~\ref{sequence-lp} and~\ref{counts}
apply to {\em any} coalitional game
rather than just weighted voting games.

In Section~\ref{dyn-prog}, we show that for weighted voting games
with polynomially-bounded weights the counting subroutine used by 
the algorithm of Section~\ref{counts} can be efficiently implemented.
Finally, in Section~\ref{violated-constraint} we show
how to modify this subroutine 
to efficiently identify a violated constraint
if a given candidate solution is infeasible.  
The results in Sections~\ref{dyn-prog} and~\ref{violated-constraint}
are specific to weighted voting games with polynomially bounded weights.

\subsection{Computing the nucleolus by solving successive
linear programs}\label{sequence-lp}
As argued in \cite{Kopel67}, the nucleolus can be computed by solving at most
$n$ successive linear programs.
The first linear program $\LP^1$ contains the inequality 
$p(S)\ge\nu(S)-\eps$ for each coalition $S\subseteq I$,
and attempts to minimize $\eps$ subject to these inequalities, i.e., it computes
a payoff in the least core as well as the value $\eps^1$ of the least core.
Given a {\em {\rm (}relative{\rm )} interior optimizer} 
$(\p^1, \eps^1)$ for $\LP^1$
(i.e., an
optimal solution that minimizes the number of tight constraints), 
let $\Sigma^1$ be the set of all inequalities in $\LP^1$
that have been made tight by $\p^1$ (we will abuse notation and use $\Sigma^1$
to refer both to these inequalities and the corresponding coalitions).
We construct
the second linear program $\LP^2$ by replacing all inequalities in $\Sigma^1$
with equations of the form
$p(S)=\nu(S)-\eps^1$, and try to minimize $\eps$ subject to this new
set of constraints. This results in $\eps^2<\eps^1$ and a payoff
vector $\p^2$ that satisfies $\p^2(S)=\nu(S)-\eps^1$ for all $S\in\Sigma^1$,
$\p^2(S)\ge\nu(S)-\eps^2$ for all $S\not\in\Sigma^1$.
We repeat this process until the payoffs to all coalitions are determined, 
i.e., the solution space of the current linear program consists of a single point.
It has been shown~\cite{Kopel67} 
that this will happen after at most $n$ iterations:
indeed, each iteration reduces the dimension of the solution space by at least 1.

More formally, the sequence of linear programs $(\LP^1, \dots, \LP^n)$
is defined as follows. 
The first linear program $\LP^1$ is given by
\begin{equation}\label{LP1}
\min_{(\p,\eps)} \eps\quad\text{subject to}\quad\left\{
\begin{aligned}
\sum_{i\in I}p_i&=1,\quad p_i\ge 0 \quad\text{ for all } i=1, \dots, n\\
\sum_{i\in S}p_i&\ge \nu(S)-\eps \text{ for all } S\subseteq I.
\end{aligned}\right.
\end{equation}
Let $(\p^1, \eps^1)$ be an interior optimizer to this linear program.
Let $\Sigma^1$ be the set of tight constraints for $(\p^1, \eps^1)$ 
(and, by a slight abuse of 
notation, the coalitions that correspond to them), i.e., 
for any $S\in\Sigma^1$ we have $p^1(S)=1-\eps^1$.

Now, suppose that we have defined the first $j-1$ linear programs
$\LP^1, \dots, \LP^{j-1}$. For $k=1, \dots, j-1$, 
let $(\p^k, \eps^k)$ be an interior optimizer for $\LP^k$
and let $\Sigma^k=\{S\mid p^k(S)=\nu(S)-\eps^k\}$.
Then the $j$th linear program $\LP^{j}$ is given by
\begin{equation}\label{LPi+1}
\min_{(\p,\eps)}\eps\quad\text{subject to}\quad\left\{
\begin{aligned}
\sum_{i\in I}p_i&=1,\quad p_i\ge 0\quad \text{ for all } i=1, \dots, n\\
\sum_{i\in S}p_i&= \nu(S)-\eps^1\text{ for all } S\in\Sigma^1\\
&\dots\\
\sum_{i\in S}p_i&= 
\nu(S)-\eps^{j-1}\text{ for all } S\in\Sigma^{j-1}\\
\sum_{i\in S}p_i&\geq \nu(S)-\eps\quad\text{ for all } 
S\not\in\cup_{k=1}^{j-1}\Sigma^k.
\end{aligned}
\right.
\end{equation}
Fix the minimal value of $t$ such that 
there is no interior solution to $\LP^t$.
It is not hard to see that the (unique) solution to $\LP^t$
is indeed the nucleolus. Indeed, the nucleolus is a payoff vector that
produces the lexicographically maximal deficit vector. This means that it:
\begin{itemize}
\item[(i)] 
minimizes $\eps^1$ such that all coalitions receive at least $1-\eps^1$;
\item[(ii)]
given (i), 
minimizes the number of coalitions that receive $1-\eps^1$;
\item[(iii)]
given (i) and (ii), minimizes $\eps^2$ such that all coalitions except for 
those receiving $1-\eps^1$ receive at least $1-\eps^2$;
\item[(iv)] 
given (i), (ii) and (iii), minimizes the number of coalitions that
receive $1-\eps^2$, etc.
\end{itemize}
Our sequence of linear programs finds a payoff vector that satisfies
all these conditions; in particular,  (ii) and (iv) (and analogous 
conditions at subsequent steps) are satisfied, since at each step we choose 
an interior optimizer for the corresponding linear program. 
The only issue that we have to address is that 
our procedure selects an {\em arbitrary} interior optimizer to the 
current linear program in order to construct the set $\Sigma^j$.
Conceivably, this may have an impact on the final solution: if two interior optimizers 
to $\LP^j$ lead to two different sets $\Sigma^j$, they may also result in different 
values of $\eps^{j+1}$, so one would have to worry about choosing the {\em right}
interior optimizer. Fortunately, this is not the case, as shown by the following lemma.
\begin{lemma}\label{lem}
Any two interior optimizers $(\p, \eps)$ and $(\q, \eps)$ 
for the linear program $\LP^j$ have the same set of tight constraints, 
i.e., the set $\Sigma^j$ is independent of the choice of the interior optimizer.
\end{lemma}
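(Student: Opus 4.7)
The plan is to exploit convexity of the optimal face of $\LP^j$. Observe that if $(\p,\eps^j)$ and $(\q,\eps^j)$ are both optimal for $\LP^j$, then for any $\lambda\in(0,1)$ the convex combination $(\mathbf{r},\eps^j):=\lambda(\p,\eps^j)+(1-\lambda)(\q,\eps^j)$ is again feasible (the feasible region is the intersection of a half-space with an affine subspace, hence convex) and also optimal, since it has the same $\eps$-coordinate $\eps^j$.

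Next I would characterize the tight inequality constraints at $\mathbf{r}$. Let $T(\x)$ denote the set of coalitions $S\not\in\bigcup_{k<j}\Sigma^k$ for which the inequality $\sum_{i\in S}x_i\ge\nu(S)-\eps^j$ holds with equality at $\x$. For any such $S$, if $\sum_{i\in S}r_i=\nu(S)-\eps^j$, then since $\sum_{i\in S}p_i\ge\nu(S)-\eps^j$, $\sum_{i\in S}q_i\ge\nu(S)-\eps^j$, and the convex combination of the left-hand sides equals $\nu(S)-\eps^j$, both inequalities must be equalities. Conversely, linearity makes any constraint tight at both $\p$ and $\q$ also tight at $\mathbf{r}$. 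Hence $T(\mathbf{r})=T(\p)\cap T(\q)$. The same analysis applies to the nonnegativity constraints $p_i\ge 0$ on the individual payoffs, so the identity $T(\mathbf{r})=T(\p)\cap T(\q)$ actually holds for all inequality constraints of $\LP^j$.

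The conclusion now follows from the defining property of an interior optimizer, namely that it minimizes the number of tight inequality constraints among all optimal solutions. Since $\mathbf{r}$ is an optimal solution, $|T(\p)|\le|T(\mathbf{r})|=|T(\p)\cap T(\q)|\le|T(\p)|$ and similarly for $\q$, forcing $T(\p)=T(\p)\cap T(\q)=T(\q)$. Thus the set of tight inequalities is the same at every interior optimizer, and $\Sigma^j$ is well-defined.

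The main obstacle is a conceptual one: making sure the informal notion of ``interior optimizer'' used in the preceding exposition lines up with ``minimizer of the cardinality of the tight inequality set on the optimal face,'' so that the extremality argument above applies. Once that identification is made, the proof reduces to the two-line convexity observation together with an elementary counting inequality; no structural property of weighted voting games is needed, which is consistent with the authors' claim that Sections~\ref{sequence-lp} and~\ref{counts} apply to arbitrary coalitional games.
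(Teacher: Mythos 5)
Your proof is correct and follows essentially the same route as the paper's: both arguments rest on convexity of the optimal face, the fact that the tight-constraint set of a strict convex combination is the intersection of the tight-constraint sets of the endpoints, and the characterization of an interior optimizer as a minimizer of the number of tight constraints. The only cosmetic difference is that you phrase the final step as a direct counting inequality ($|T(\p)|\le|T(\p)\cap T(\q)|\le|T(\p)|$) whereas the paper phrases it as a contradiction, and your version is if anything slightly more carefully spelled out.
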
 
\begin{proof}
First note that the set of all interior optimizers for $\LP^j$
is convex. Now, suppose that  $\p$ and $\q$ are two interior optimizers
for $\LP^j$, but have different sets of tight constraints. Then, by convexity, 
any convex combination $\alpha\p+(1-\alpha)\q$ of $\p$ and $\q$
is also an interior optimizer for $\LP^j$. However, the set of constraints
that are tight for $\alpha\p+(1-\alpha)\q$ is the intersection 
of the corresponding sets for $\p$ and $\q$, i.e., 
$\alpha\p+(1-\alpha)\q$ has strictly fewer tight constraints than $\p$
or $\q$, a contradiction with $\p$ and $\q$ being interior optimizers for $\LP^j$.
\end{proof}

We conclude that when this algorithm terminates, the output is indeed
the nucleolus. Next, we discuss how to solve
each of the linear programs $\LP^j$, $j=1, \dots, t$.

\subsection{Solving the linear programs $\LP^1, \dots, \LP^t$}\label{counts}
It is well-known (see e.g. \cite{\schrlp,\gls}) 
that a linear program can be solved in polynomial time
by the ellipsoid method
as long as it has a polynomial-time {\em separation oracle}, 
i.e., an algorithm that, given a 
candidate feasible solution, either confirms that it is feasible or outputs a 
violated 
constraint. Moreover, the ellipsoid method can also be used to find an interior
optimizer (rather than an arbitrary optimal solution) in polynomial 
time~\cite[Thm.~6.5.5]{\gls}, 
as well as to decide whether one exists~\cite[Thm.~6.5.6]{\gls}.
We will now construct a polynomial-time separation oracle for $j$th linear
program $\LP^j$ in our sequence. 

It is easy to construct the part responsible for
checking equations of $\LP^{j}$ in~\eqref{LPi+1}, assuming that
we already have an oracle for the $(j-1)$st program $\LP^{j-1}$.
Indeed, the latter oracle can be easily modified to also check
whether the equation $\eps=\eps^{j-1}$ holds, thus providing
an oracle for the optimal face of $\LP^{j-1}$. 
Then by~\cite[Thm.~6.5.5]{\gls} we can compute a basis of the optimal
face (which consists of at most $n$ equations) in polynomial time.
The separation oracle can then reject a candidate solution $(\p, \eps)$
if $\p$ violates one of those basis equations.

Dealing with the inequalities of $\LP^{j}$ is more complicated. 
A naive separation oracle would have to explicitly list the sets 
$\Sigma^1, \dots, \Sigma^{j-1}$, which may be superpolynomial in size.
Alternatively, one can treat this part of
the oracle as a 0-1 integer linear feasibility problem, with 0-1 variables
$x_i$ encoding a set $S\not\in\cup_{k=1}^{j-1}\Sigma^k$ 
that provides a separating inequality for the oracle input
$(\p,\eps)$. Namely, suppose that we have verified
that $\p$ satisfies all the equations in $\LP^j$ (as described above).
Then, given an interior optimizer 
$(\p^{j-1}, \eps^{j-1})$ for $\LP^{j-1}$, the values $x_1, \dots, x_n$ 
can be obtained as a solution to the following inequalities:
\begin{align}
\label{eq:pjminone} \sum_i p^{j-1}_i x_i& > 1-\eps^{j-1},\\ 
\label{eq:p} \sum_i p_i x_i& < 1-\eps,\\ 
\sum_i w_i x_i&\geq q.
\end{align}
The problem with this approach is that for arbitrary
rational $\p$ and $\p^{j-1}$ this system of inequalities
is at least as hard as {\sc Knapsack}, which is NP-complete.
Moreover, as $\p$ and $\p^{j-1}$ are produced by the ellipsoid method, 
there is no guarantee that their bitsizes
are small enough to use a (pseudo)polynomial-time
algorithm for {\sc Knapsack}.
The only hope is to replace at least one of \eqref{eq:pjminone} and
\eqref{eq:p} by something ``tame''.  

We will now present a more sophisticated approach 
to identifying a violated constraint. In a way, it can be seen as
replacing checking~(\ref{eq:pjminone}) with counting.
Our construction proceeds by induction:
to construct a separation oracle for $\LP^j$, we assume that we have constructed
an oracle for $\LP^{j-1}$, and are given the sizes of sets
$\Sigma^1, \dots, \Sigma^{j-1}$ as well as
the sequence $(\eps^1, \dots, \eps^{j-1})$.

By construction, any optimal solution $(\p, \eps)$ to $\LP^j$ satisfies
$\eps<\eps^{j-1}$, so we can add the constraint $\eps\le\eps^{j-1}$ to $\LP^j$
without changing the set of solutions. From now on, we will assume that
$\LP^j$ includes this constraint. Our separation oracle
will first check whether a given candidate solution $(\p, \eps)$
satisfies $\eps\le\eps^{j-1}$, 
as well as constraints $p_i\ge 0$ for all $i=1, \dots, n$
and $p(I)=1$, and reject $(\p, \eps)$ and output a violated constraint 
if this is not the case. Therefore, 
in what follows we assume that $(\p, \eps)$ satisfies all these 
easy-to-identify constraints. 

Now, a candidate solution $(\p, \eps)$ is feasible for $\LP^j$
if $p(S)=\nu(S)-\eps^t$ for $S\in\Sigma^t$, $t=1, \dots, j-1$,
and $p(S)\ge\nu(S)-\eps$ for all $S\not\in\cup_{t=1}^{j-1}\Sigma^t$.
Recall that the deficit of a coalition $S\subseteq I$
under a payoff vector $\p$ is given by $\nu(S)-p(S)$.
Suppose that we have a procedure $\P(\p, \eps)$
that, given a candidate solution $(\p, \eps)$, 
can efficiently compute the top $j$ 
distinct deficits under $\p$, i.e., 
\begin{align*}
m^1&=\max\{d(S)\mid S\subseteq I\}\\ 
m^2&=\max\{d(S)\mid S\subseteq I, d(S)\neq m^1\}\\ 
&\dots\\
m^j&=\max\{d(S)\mid S\subseteq I, d(S)\neq m^1, \dots, m^{j-1}\}
\end{align*}
as well as
the numbers $n^1, \dots, n^j$ of coalitions that have deficits of 
$m^1, \dots, m^j$, respectively:
\begin{equation*}
n^k=|\{S\mid S\subseteq I, d(S)=m^k\}|,\qquad k=1,\dots,j.  
\end{equation*}
Suppose also that we are given
the values $\eps^1, \dots, \eps^{j-1}$
and the sizes $s^t$
of the sets $\Sigma^t$, $t=1, \dots, j-1$.

Now, our algorithm works as follows. Given a candidate solution
$(\p, \eps)$, it runs 
$\P(\p, \eps)$ to obtain 
$m^t, n^t$, $t=1, \dots, j$.
If $\eps<\eps^{j-1}$, it then checks whether
\begin{itemize}
\item[(a)]
$m^t=\eps^t$ and $n^t=s^t$ for all $t=1, \dots, j-1$ 
\item[(b)]
$m^j\le \eps$. 
\end{itemize}
If $\eps=\eps^{j-1}$, it
simply checks whether $m^t=\eps^t$ for $t=1, \dots, j-1$
and $n^t=s^t$ for all $t=1, \dots, j-2$
\footnote{Alternatively, if $\eps=\eps^j$, 
one can verify whether $(\p, \eps)$ is a feasible solution to the previous linear
program $\LP^{j-1}$.}.
If these conditions are satisfied, the algorithm answers
that $(\p, \eps)$ is indeed a feasible solution, and otherwise it 
identifies and outputs a violated constraint (for details of this step, see 
Section~\ref{violated-constraint}).
We will now show that this algorithm implements a separation oracle 
for $\LP^j$ correctly and efficiently. 
\begin{theorem}\label{th:oracleisok}
Given the values $\eps^t, s^t$, $t=1, \dots, j-1$, and a procedure 
$\P(\p, \eps)$ that
computes $m^t, n^t$, $t=1, \dots, j$, in polynomial time, our algorithm
correctly decides whether a given pair $(\p, \eps)$ is feasible for 
$\LP^j$ and runs in polynomial time.
\end{theorem}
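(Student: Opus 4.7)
The plan is to prove two claims separately: the algorithm terminates in polynomial time, and it correctly decides feasibility of $(\p, \eps)$ for $\LP^j$ (under the easy constraints $\eps \leq \eps^{j-1}$, $p_i \geq 0$, $\sum_i p_i = 1$, which the oracle checks first). Polynomial running time is immediate: the algorithm calls $\P(\p, \eps)$ once at polynomial cost and then performs $O(j) \leq O(n)$ comparisons of the returned $m^t$ and $n^t$ against the stored $\eps^t$ and $s^t$.

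For necessity of the checks, I would take any $(\p, \eps)$ feasible for $\LP^j$ and split on the two cases in the algorithm. If $\eps < \eps^{j-1}$, the equations of $\LP^j$ force $d(\p, S) = \eps^t$ for every $S \in \Sigma^t$, while the inequalities force $d(\p, S) \leq \eps < \eps^{j-1}$ for every $S \notin \cup_t \Sigma^t$. Since $\eps^1 > \cdots > \eps^{j-1}$, this pins down $\{S : d(\p, S) = \eps^t\} = \Sigma^t$ for each $t \leq j-1$, so $m^t = \eps^t$, $n^t = s^t$, and $m^j \leq \eps$. The case $\eps = \eps^{j-1}$ is analogous except that the set with deficit $\eps^{j-1}$ may strictly contain $\Sigma^{j-1}$ (coalitions outside $\cup_t \Sigma^t$ may now have tight inequality), which is precisely why the algorithm drops $n^{j-1} = s^{j-1}$ in that case.

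For sufficiency (the main content), my plan is induction on $j$. The base $j = 1$ is immediate: the single condition $m^1 \leq \eps$ rewrites as $p(S) \geq \nu(S) - \eps$ for every $S$. For the inductive step, assume the $\LP^{j-1}$ oracle is correct and suppose the $\LP^j$ checks pass on $(\p, \eps)$. A line-by-line comparison of the two check specifications shows that the $\LP^{j-1}$ checks pass on $(\p, \eps^{j-1})$; the only new piece needed is $m^{j-1} \leq \eps^{j-1}$, which is free because the $\LP^j$ checks give $m^{j-1} = \eps^{j-1}$. By the inductive hypothesis, $(\p, \eps^{j-1})$ is feasible, and since $\eps^{j-1}$ is the optimum of $\LP^{j-1}$, it is optimal. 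Now I invoke Lemma~\ref{lem}: its convexity argument shows that $\Sigma^{j-1}$ equals the set of inequality constraints tight at every optimizer of $\LP^{j-1}$, so $\Sigma^{j-1} \subseteq \{S : d(\p, S) = \eps^{j-1}\}$, and the cardinality identity $n^{j-1} = s^{j-1}$ upgrades this to equality. The same reasoning at each earlier level (where $\Sigma^t$ is encoded as equations of $\LP^{j-1}$ and hence directly tight) gives $\Sigma^t = \{S : d(\p, S) = \eps^t\}$ for all $t \leq j-1$. These identities are exactly the equations of $\LP^j$, and check (b) supplies the remaining inequalities: $S \notin \cup_t \Sigma^t$ iff $d(\p, S) \notin \{\eps^1, \dots, \eps^{j-1}\}$, so $d(\p, S) \leq m^j \leq \eps$.

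The hard part is this last sufficiency argument, specifically the promotion of cardinality equality $n^t = s^t$ to set equality $\Sigma^t = \{S : d(\p, S) = \eps^t\}$. Counts alone cannot identify the sets; what performs the lifting is the interior-optimizer characterization of $\Sigma^t$ from Lemma~\ref{lem}, combined with the inductive fact that $(\p, \eps^{j-1})$ is actually an optimizer of $\LP^{j-1}$. Arranging the induction so that these structural invariants are available at every level is the main subtlety.
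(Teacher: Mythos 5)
Your proof is correct and follows essentially the same route as the paper: both directions ultimately hinge on using the convexity argument of Lemma~\ref{lem} to promote the count identities $n^t=s^t$ into the set identities $\Sigma^t=\{S: d(\p,S)=\eps^t\}$, which is precisely what the paper's auxiliary Lemma~\ref{induction} establishes by induction on the level $t$. The only organizational difference is that you phrase the induction as being on $j$ across the oracles (using correctness of the $\LP^{j-1}$ oracle to conclude that $(\p,\eps^{j-1})$ is an optimizer of $\LP^{j-1}$, then applying the containment $\Sigma^{j-1}\subseteq\mathrm{tight}(\p)$ plus counting), whereas the paper runs the analogous induction on the level $s$ directly inside the proof for a fixed $j$; the two arguments prove the same intermediate facts.
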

\begin{proof}
We start by proving an auxiliary lemma.
\begin{lemma}\label{induction}
For any vector $\p$ and any $t\le j-1$
such that $m^s=\eps^s$, $n^s=s^s$ for all $s\le t$,
the coalitions with deficit $\eps^s$
under $\p$
are exactly the ones in $\Sigma^s$.
\end{lemma}
\begin{proof}
The proof is by induction on $s$.
For $s=1$, we have
that the largest deficit of any coalition
under $\p$ is $\eps^1$, and there are
exactly $s^1$ coalitions with this deficit. Hence,
$(\p, \eps^1)$ is an interior optimizer for $\LP^1$,
and therefore the lemma follows by Lemma~\ref{lem}.
Now, suppose that the lemma has been proven for $s-1$.
By the induction hypothesis, $\p$
satisfies all constraints in $\Sigma^1, \dots, \Sigma^{s-1}$.
Also, under $\p$ there are at most $s^1+\dots+s^{s-1}$
coalitions whose deficit exceeds $\eps^{s}$, so for all coalitions
not in $\cup_{r=1}^{s-1}\Sigma^{r}$ their deficit is at most $\eps^s$.
Finally, there are exactly $s^s$ coalitions whose deficit is exactly
$\eps^s$. Hence,
$(\p, \eps^s)$ is an interior optimizer for $\LP^s$,
and therefore the lemma follows by Lemma~\ref{lem}.
\end{proof}
To prove the theorem, let us first consider the case $\eps<\eps^{j-1}$.
Suppose that $(\p, \eps)$ satisfies (a) and (b).
By using Lemma~\ref{induction} with $t=j-1$, we conclude
that $(\p, \eps)$ satisfies all equations in $\LP^j$. Now,
under $\p$,
$m^j$ is the largest deficit of a coalition not in $\cup_{t=1}^{j-1}\Sigma^t$.
If this deficit is at most $\eps$, then the pair $(\p, \eps)$ is a feasible
solution to $\LP^j$.

Conversely, suppose that (a) or (b) is violated.
If $(\p, \eps)$ satisfies (a) but violates (b), by using Lemma~\ref{induction}
with $t=j-1$ we conclude that,
under $\p$ the coalitions in $\Sigma^t$ have deficit $\eps^t$
for  $t=1, \dots, j-1$,
but the deficit of some coalition not in $\cup_{t=1}^{j-1}\Sigma^t$ exceeds
$\eps$. Hence, this coalition corresponds to a violated constraint.
Now, suppose that (a) does not hold,
and let $s$ be the smallest index for which
$m^{s}\neq \eps^{s}$ or $n^s\neq s^s$. By using Lemma~\ref{induction}
with $t=s-1$, we conclude that for $r=1, \dots, s-1$
the coalitions in $\Sigma^{r}$ have deficit $\eps^{r}$.
However, either the $s$th distinct deficit under $\p$
is not $\eps^s$, in which case $\p$ violates a constraint in
$2^I\setminus\cup_{r=1}^{s-1}\Sigma^r$,
or under $\p$ there are more than $s^s$ coalitions
with deficit $\eps^s$ (note that by construction
it cannot be the case that $n^s<s^s$). In the latter case, there is a coalition
in $2^I\setminus\cup_{r=1}^s\Sigma^{r}$ whose
deficit exceeds $\eps^s$,
thus violating the corresponding constraint.
The case $\eps=\eps^{j-1}$ is similar. In this case for a candidate solution
$(\p, \eps)$ to be feasible, it is not required that
there are exactly $s^{j-1}$ coalitions with
deficit $\eps^{j-1}$. Hence, the algorithm only has to decide if
for all $t=1, \dots, j-2$, the coalitions with deficit $\eps^t$
under $\p$ are exactly the ones in $\Sigma^t$, and all other coalitions
get at least $\eps^{j-1}$. Showing that our algorithm checks
this correctly can be done similarly to the previous case.

The bound on the running time is obvious from the description of the algorithm.
\end{proof}

To provide the value $s^j=|\Sigma^j|$ for the subsequent linear programs
$\LP^{j+1}, \dots, \LP^n$, we need to find an interior optimizer for $\LP^j$. 
Thm.~6.5.5 in~\cite{\gls} explains how to do this given 
a separation oracle for the optimal face, i.e., the set of all optimizers of $\LP^j$.
Observe that such an oracle can be obtained by a slight modification
of the oracle described above. Indeed, the optimal face is the set of solutions
to the linear feasibility problem given by the constraints in $\LP^j$
together with the constraint $\eps=\eps^j$. The modified oracle 
first checks the latter constraint, reports the violation (and the corresponding
inequality) if it happens, and otherwise continues as the original oracle. 
Clearly, the modified oracle runs in polynomial time whenever the original one does.
Hence, we can compute $s^j$ in polynomial time by
computing an interior solution 
$(\p, \eps)$ to $\LP^j$ according to~\cite[Thm.~6.5.5]{\gls},
running $\P(\p, \eps)$ to find $n^j$, and setting $s^j=n^j$.

\subsection{Implementing the counting}\label{dyn-prog}
We will now show how to implement the counting procedure $\P(\p, \eps)$
used in Section~\ref{counts} for WVGs.
The running time of our procedure is polynomial
in the number of players $n$ and the maximum weight $W$.

Our approach is based on dynamic programming. Fix 
a WVG $(I; \w; q)$, 
a payoff vector $\p$, and $j\le n$. 
For all $k=1, \dots, n$, $w=1, \dots, nW$, 
let $X^1_{k, w}, \dots, X^j_{k, w}$ be the bottom $j$
distinct payoffs to coalitions in $\{1, \dots, k\}$ of weight $w$, 
i.e., define 
\begin{align*}
X^1_{k, w} &= \min\{p(S)\mid S\subseteq \{1, \dots, k\}, w(S)=w \}\\
X^2_{k, w} &= \min\{p(S)\mid S\subseteq \{1, \dots, k\}, w(S)=w,
p(S)\neq X^1_{k, w}\}\\
&\dots\\
X^j_{k, w} &= \min\{p(S)\mid S\subseteq \{1, \dots, k\}, w(S)=w, 
p(S)\neq X^1_{k, w}, \dots, X^{j-1}_{k, w}\}
\end{align*}
and
let $Y^1_{k, w}, \dots, Y^j_{k, w}$ be the numbers of coalitions
that get these payoffs, i.e. set 
\begin{equation*}
Y^t_{k, w} = |\{S\mid S\subseteq \{1, \dots, k\}, w(S)=w,
p(S)=X^t_{k, w}\}|,\qquad
\text{for } t=1, \dots, j. 
\end{equation*}
These quantities can be computed inductively for $k=1, \dots, n$ as 
follows.

For $k=1$, we have
$X^1_{1, w} = p_1$ if $w=w_1$ and $+\infty$ otherwise,
$Y^1_{1, w} = 1$ if $w=w_1$ and $0$ otherwise, and
$X^t_{1, w} = +\infty$, $Y^t_{1, w}=0$ for $t=2, \dots, j$.

Now, suppose that we have computed
$X^1_{k-1, w}, \dots, X^j_{k-1, w}, Y^1_{k-1, w}, \dots, Y^j_{k-1, w}$
for all $w=1, \dots, nW$.
Consider $S\subseteq\{1, \dots, k\}$ receiving
one of the bottom $j$ distinct payoffs to subsets of $\{1, \dots, k\}$
of weight $w$, i.e., $p(S)\in\{X^1_{k, w}, \dots, X^j_{k, w}\}$.
Then either
\begin{itemize}
\item[(1)]
$S\subseteq\{1, \dots, k-1\}$, in which case
$S$ must be among the coalitions that receive
one of the bottom $j$ distinct payoffs to subsets of $\{1, \dots, k-1\}$
of weight $w$, i.e.,
we have $p(S)\in\{X^1_{k-1, w}, \dots, X^j_{k-1, w}\}$, or
\item[(2)]
$k\in S$, in which case $S\setminus\{k\}$ must be
among the coalitions that receive
one of the bottom $j$ distinct payoffs to subsets of $\{1, \dots, k-1\}$
of weight $w-w_k$, i.e.,
we have $p(S\setminus\{k\})\in\{X^1_{k-1, w-w_k}, \dots, X^j_{k-1, w-w_k}\}$.
\end{itemize}
Consider the multi-set
$\S_{k, w}=\{X^1_{k-1, w}, \dots, X^j_{k-1, w},
p_k+X^1_{k-1, w-w_k}, \dots, p_k+X^j_{k-1, w-w_k}\}$.
By the argument above, we have
\begin{align*}
X^1_{k, w} &= \min\{x\mid x\in\S_{k, w}\}\\
X^2_{k, w} &= \min\{x\mid x\in\S_{k, w}, x\neq X^1_{k, w}\}\\
&\dots\\
X^j_{k, w} &=
\min\{x\mid x\in\S_{k, w}, x\neq X^1_{k, w}, \dots, X^{j-1}_{k, w}\}.
\end{align*}
The number of coalitions that receive the payoff
$X^t_{k, w}$, i.e., $Y^t_{k, w}$, $t=1, \dots, j$, depends on how
many times $X^t_{k, w}$ appears in $\S_{k, w}$. If it only appears once,
then there is only one source of sets that receive a payoff of $X^t_{k, w}$,
i.e., we set $Y^t_{k, w}=Y^s_{k-1, w}$ if $X^t_{k, w}$ appears as
$X^s_{k-1, w}$ for some $s=1, \dots, j$, and we set
$Y^t_{k, w}=Y^s_{k-1, w-w_k}$ if $X^t_{k, w}$ appears as $X^s_{k-1, w-w_k}+p_k$
for some $s=1, \dots, j$. On the other hand,
if $X^t_{k, w}$ appears twice in $\S_{k, w}$ (first time as $X^s_{k-1, w}$
and second time as $p_k+X^{s'}_{k-1, w-w_k}$ for some $s, s'=1, \dots, j$),
we have to add up the corresponding counts, i.e., we set
$Y^t_{k, w}=Y^s_{k-1, w}+ Y^{s'}_{k-1, w-w_k}$.

After all $X^1_{n, w}, \dots, X^j_{n, w}, Y^1_{n, w}, \dots, Y^j_{n, w}$
have been evaluated, it is not hard to compute $m^t, n^t$, $t=1, \dots, j$.
Indeed, the top $j$ deficits
appear in the multi-set 
$$
\S=
\{I_w-X^1_{n, w}, \dots, I_w-X^j_{n, w}\mid w=1, \dots, nW\}, 
$$
where $I_w=1$ if $w\ge q$ and $I_w=0$ if $w<q$ (recall that $q$
is the quota of the game, i.e., $\nu(S)=1$ if and only if $w(S)\ge q$).
Hence, we can set
\begin{align*}
m^1 &= \max\{x\mid x\in\S\}\\
m^2 &= \max\{x\mid x\in\S, x\neq m^1\}\\
&\dots\\
m^j &= \max\{x\mid x\in\S, x\neq m^1, \dots, m^{j-1}\}.
\end{align*}
The procedure for computing $n^t$, $t=1, \dots, j$, is similar
to that of computing $Y^s_{k, w}$ (see above):
we have to check how many times $m^t$ appears in $\S$
and add the corresponding counts.

In the next subsection, 
we will show how to find a violated inequality if
$(\p, \eps)$ is not a feasible solution to $\LP^j$.

\subsection{Identifying a violated constraint}\label{violated-constraint}
Consider $\LP^j$ and a candidate solution $(\p, \eps)$.
Suppose that the algorithm described in the previous subsection
has decided that $(\p, \eps)$ is not a feasible solution to $\LP^j$.
This can happen in three possible ways.
\begin{itemize}
\item[(a)] 
$m^s=\eps^s$, $n^s=s^s$ for $s=1, \dots, \ell-1$, but
$m^\ell\neq \eps^\ell$ 
for an $\ell<j$.
\label{fcase}
\item[(b)] 
$m^s=\eps^s$, $n^s=s^s$ for $s=1, \dots, \ell-1$, $m^\ell=\eps^\ell$, 
but $n^\ell\neq s^\ell$
for an $\ell<j$.
\label{scase}
\item[(c)]
$m^s=\eps^s$, $n^s=s^s$ for $s=1, \dots, j-1$, but
$m^j> \eps$.\label{tcase}
\end{itemize}
In cases (a) and (b), there is a violated equation in \eqref{LPi+1}, 
while in~(c) there are none (but there is a violated
inequality). Thus (a) and (b) can be handled using the ideas
discussed in the beginning of Section~\ref{counts}. Indeed, as argued
there, we can efficiently compute the basis of the optimal face of
the feasible set of $\LP^{j-1}$ using the ellipsoid method.
One can then easily check if a candidate solution violates
one of the equations in the basis (recall that there are at most $n$ of them), 
and, if this is the case, 
report one that is violated. 
Hence, we only need to show how to identify a 
violated constraint in case~(c). However, for completeness, we present here a 
purely counting-based algorithm for each of the cases.

In case (a),  
let $(\hat{\p}, \eps^{\ell})$ be an interior optimizer
for $\LP^{\ell}$. 
Under $\hat{\p}$, the deficit of any coalition in 
$2^I\setminus\cup_{s=1}^{\ell-1}\Sigma^s$ is at most $\eps^\ell$.
On the other hand, under $\p$, there are $n^\ell$ coalitions
in $2^I\setminus\cup_{s=1}^{\ell-1}\Sigma^s$ 
whose deficit is $m^\ell>\eps^\ell$. Each of these coalitions
corresponds to a violated constraint: indeed, if such a coalition
is in $\Sigma^s$, $s\ge\ell$, then $\LP^j$ requires that its deficit
is $\eps^s\le\eps^\ell<m^\ell$, and if it is in 
$2^I\setminus\cup_{s=1}^{j-1}\Sigma^s$, then 
$\LP^j$ requires that its deficit
is at most $\eps\le\eps^\ell<m^\ell$,  
Hence, it suffices to identify a coalition whose deficit
under $\p$ is $m^\ell$. To this end, 
we can modify the dynamic program for $\p$
as follows. Together with every variable $X^t_{k, w}$,
$t=1, \dots, j$, $k=1, \dots, n$, $w=1, \dots, nW$, we will use
an auxiliary variable $Z^t_{k, w}$ which stores a coalition
whose payoff under $\p$ is equal to $X^t_{k, w}$. The values
of $Z^t_{k, w}$ can be easily computed by induction:
if  $X^t_{k, w}=X^s_{k-1, w}$ for some $s=1, \dots, j$
then $Z^t_{k, w}=Z^s_{k-1, w}$, and
if  $X^t_{k, w}=p_k+X^s_{k-1, w-w_k}$ for some $s=1, \dots, j$
then $Z^t_{k, w}=Z^s_{k-1, w}\cup\{k\}$
(if $X^t_{k, w}=X^s_{k-1, w}=p_k+X^{s'}_{k-1, w-w_k}$, 
we can set $Z^t_{k, w}$ to either of these values).
Now, there exist some $t$ and $w$ such that
$w\ge q$ and $1-X^t_{n, w}=m^\ell$ or
$w< q$ and $-X^t_{n, w}=m^\ell$; such $t$ and $w$ can be found
by scanning all $X^t_{n, w}$. The corresponding set $Z^t_{n, w}$
has deficit $m^\ell$ under $\p$, 
and therefore corresponds to a violated constraint.

In case (b), as before, 
let $(\p, \eps)$ be an interior optimizer to $\LP^{j-1}$.
There exists a coalition whose 
deficit under $\p$ is $\eps^\ell$, but whose deficit
under $\hat{\p}$ is strictly less than $\eps^\ell$.
To find such a coalition, 
run $\P(\hat{\p}, \eps)$
in order to compute the corresponding values
$\hat{X}^t_{k, w}$, $\hat{Y}^t_{k, w}$
$t=1, \dots, j-1$, $k=1, \dots, n$, $w=1, \dots, nW$.
Define $Z_w$ as follows:
if there exists some $t\in\{1, \dots, j-1\}$ such that
$X^t_{n, w}=I_w-\eps^\ell$, set $Z_w=Y^t_{n, w}$;
otherwise, set $Z_w=0$.
$\hat{Z}_w$ is defined similarly:
if there exists an $s\in\{1, \dots, j-1\}$ such that
$\hat{X}^s_{n, w}=I_w-\eps^\ell$, set $\hat{Z}_{w}=\hat{Y}^s_{n, w}$;
otherwise, set $\hat{Z}_{w}=0$.
The variables $Z_{w}$ and $\hat{Z}_{w}$ count the number of coalitions
that have total weight $w$ and have deficit $\eps^\ell$
under $\p$ and $\hat{\p}$, respectively.
We have 
$n^\ell =\sum_{w=1, \dots, nW}Z_{w}, 
 s^\ell =\sum_{w=1, \dots, nW}\hat{Z}_{w}$.
As $n^\ell>s^\ell$, there exists a weight $w$
such that $Z_w>\hat{Z}_w$. Set $q=I_w-\eps^\ell$, 
i.e., $q$ is the total payment received by the coalitions
counted by $Z_w$ and $\hat{Z}_w$.
Now, we have $Z_w=Z_w^n+Z_w^{-n}$, where
$Z_w^n$ is the number of coalitions of weight
$w$ that include $n$, have weight $w$ and receive total payment $q$, 
and 
$Z_w^{-n}$ is the number of coalitions of weight
$w$ that do not include $n$, have weight $w$ 
and receive total payment $\eps^\ell$;
$\hat{Z}^n_w$ and $\hat{Z}^{-n}_w$ can be defined similarly.
We can easily compute these quantities: for example, $Z^{n}_w$
is the number of subsets of $\{1, \dots, n-1\}$ that have weight
$w-w_n$ and receive total payment $q-p_n$, i.e.
$Z^{n}_w=Y^t_{n-1, w-w_n}$ if there exists a $t\in\{1, \dots, j-1\}$ such that
$X^t_{n-1, w-w_n}=q-p_n$, and $Z^{-n}_w=0$ otherwise.
It follows immediately that $Z^n_w>\hat{Z}^n_w$
or $Z^{-n}_w>\hat{Z}^{-n}_w$ (or both), and we can easily verify which of
these cases holds. In the former case, we can conclude that the number of coalitions
in $\{1, \dots, n-1\}$ 
that have weight $w-w_n$ and are paid $q-p_n$ under $\p$
exceeds the number of coalitions in $\{1, \dots, n-1\}$
that have weight $w-w_n$ and are paid $q-\hat{p}_n$ under $\hat{\p}$.
In the latter case, we can conclude that the number of coalitions
in $\{1, \dots, n-1\}$
that have weight $w$ and are paid $q$ under $\p$
exceeds the number of coalitions in $\{1, \dots, n-1\}$
that have weight $w$ and are paid $q$ under $\hat{\p}$.
Continuing in the same manner for $n-1, \dots, 1$, we can identify
a coalition that is paid $q$ under $\p$, but not under $\hat{p}$.

Case (c), i.e. $m^j>\eps$, is similar to (a) and can be handled in
the same manner.
 
\section{Conclusions and future work}\label{sec:conclusion} 
In this paper, we
proposed a new technique for computing the nucleolus of coalitional games.
Namely, we have shown that, when constructing the separation oracle for the
$j$th linear program $\LP^j$, instead of storing the sets of tight constraints
for the linear programs $\LP^t$, $t=1, \dots, j-1$, it suffices to store the
sizes of these sets as well as the top $j-1$ deficits of an interior optimizer
$(\p^{j-1}, \eps)$ to $\LP^{j-1}$.  A feasibility of a candidate solution $(\p,
\eps)$ to $\LP^j$ can then be verified, roughly, by computing the top $j$
deficits for $\p$ as well as the number of coalitions that have these deficits,
and comparing these values to their pre-computed counterparts for $(\p^{j-1},
\eps)$.

We then demonstrated the usefulness of this technique by showing that for
weighted voting games with polynomially-bounded weights both the top $j$
deficits and the number of coalitions that have these deficits can be
efficiently computed using dynamic programming. This allows us to implement the
separation oracles for our linear programs in pseudopolynomial time. Combining
this with the ellipsoid algorithm results in a pseudopolynomial time algorithm
for the nucleolus of weighted voting games, thus solving an open problem posed
by~\cite{\aaai}. Furthermore, the general technique 
put forward in this paper effectively reduces the computation of the nucleolus
to solving a natural combinatorial problem for the underlying game.
Namely, we can state the following meta-theorem:
\begin{theorem}
Given a coalitional game $G$, 
suppose that we can, for any payoff vector $\p$, identify the top $n$ distinct
deficits under $\p$ as well as the number of coalitions that have these deficits
in polynomial time.
Then we can compute the nucleolus of $G$ in polynomial time. 
\end{theorem}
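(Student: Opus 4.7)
The plan is to instantiate the framework of Sections~\ref{sequence-lp}--\ref{violated-constraint} with the hypothesized counting subroutine playing the role of $\P(\p,\eps)$. Kopelowitz's reduction already packages the nucleolus as the outcome of a chain $\LP^1,\dots,\LP^t$ of linear programs with $t\le n$, each to be solved by the ellipsoid method. By Theorem~\ref{th:oracleisok}, a polynomial-time separation oracle for $\LP^j$ can be assembled from the recorded data $(\eps^1,\dots,\eps^{j-1};s^1,\dots,s^{j-1})$ together with a subroutine that returns the top $j$ distinct deficits and their counts under a candidate $(\p,\eps)$; since $j$ never exceeds $n$, the hypothesis furnishes exactly this.

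The outer loop proceeds by induction on $j$. Having solved $\LP^{j-1}$ and recorded $(\eps^k,s^k)_{k\le j-1}$, we assemble the oracle for $\LP^j$, invoke the ellipsoid method to obtain an interior optimizer $(\p^j,\eps^j)$ via~\cite[Thm.~6.5.5]{\gls}, and set $s^j=n^j$ by one call of the counting subroutine on $\p^j$. After at most $n$ such iterations the solution space collapses to a single point, and Lemma~\ref{lem} together with the discussion following Theorem~\ref{th:oracleisok} identifies that point with the nucleolus. Each iteration makes only polynomially many calls to the counting subroutine, so the overall running time is polynomial in the encoding of $G$.

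The one delicate point is that the ellipsoid method demands an explicit violated inequality, not merely an infeasibility verdict, whenever the oracle rejects a candidate $(\p,\eps)$. Equation-type violations (cases (a) and (b) of Section~\ref{violated-constraint}) are handled generically: we compute a basis of the optimal face of $\LP^{j-1}$ via~\cite[Thm.~6.5.5]{\gls} applied to the $\LP^{j-1}$ oracle, and check the resulting at-most-$n$ basis equations one by one. The genuine obstacle is the inequality-type violation $m^j>\eps$, for which a concrete coalition of deficit $m^j$ must be exhibited. The meta-theorem should therefore be read with the implicit rider that the counting subroutine can also be restricted, for any fixed agent $i$, to count the top deficits over coalitions that contain $i$ (respectively omit $i$); given this, a witness coalition is produced by standard self-reducibility, fixing the membership of agents $i=1,\dots,n$ one at a time from $O(n)$ additional counts. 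This enrichment is automatic for any counting procedure arising from a dynamic program over agents, as in the implementation of Section~\ref{dyn-prog}, and so the theorem goes through in all natural settings of interest.
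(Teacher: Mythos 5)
Your proposal is correct and follows essentially the same route as the paper: the paper offers no standalone proof of this meta-theorem, treating it as the summary of Sections~\ref{sequence-lp} and~\ref{counts} (which are stated for arbitrary coalitional games) combined with the hypothesized counting subroutine in place of $\P(\p,\eps)$, exactly as you do. The one place you go beyond the paper is also the most valuable part of your write-up: you correctly observe that the bare counting hypothesis does not, by itself, let the oracle \emph{exhibit} a violated inequality in the case $m^j>\eps$, whereas the ellipsoid method requires one; the paper resolves this only in the WVG-specific Section~\ref{violated-constraint} (via the auxiliary witness variables $Z^t_{k,w}$ and the agent-by-agent peeling argument in case~(b)), and your ``implicit rider'' that the counts be available restricted to coalitions containing or omitting a fixed agent, followed by self-reducibility, is precisely the right patch and matches what the paper's dynamic program delivers in the weighted-voting case.
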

We believe that this framework can be useful for computing the nucleolus
in other classes of games. Indeed, by stripping away most of the game-theoretic
terminology, we may be able to find the nucleolus by 
applying existing results in combinatorics and discrete mathematics 
in a black-box fashion.

In the context of weighted voting games, 
our assumption that the weights are polynomially bounded
(or, equivalently, given in unary) is essential, as~\cite{\aaai} shows
that the nucleolus is NP-hard to compute for WVGs with
weights given in binary. Moreover, in many practical scenarios the
agents' weights cannot be too large (e.g., polynomial functions of $n$), in
which case the running time of our algorithm is polynomial.

By a slight modification of our algorithm, we can obtain a pseudopolynomial
time algorithm for computing the nucleolus in {\em $k$-vector weighted voting
games} for constant $k$.  Informally speaking, these are games given by the
intersection of $k$ weighted voting games, i.e., a coalition is considered to
be winning if it wins in each of the underlying games. There are some
interesting games that can be represented as $k$-vector weighted voting games
for small values of $k$ (i.e., $k=2$ or $k=3$), but not as weighted voting
games, most notably, voting in the European Union~\cite{bilbao:2002a}. Hence,
this extension of our algorithm enables us to compute the nucleolus in some
real-life scenarios.  The overall structure of our algorithm remains the same.
However, the dynamic program has to be modified to keep track of several weight
systems simultaneously.

Another natural way to address the problem of computing the nucleolus is by
focusing on approximate solutions. Indeed, \cite{\aaai} proposes a fully
polynomial time approximation scheme (FPTAS) for several least-core related
problems. It would be natural to expect a similar result to hold for the
nucleolus.  Unfortunately, this approach is ruled out by~\cite{\aaai}, which
shows that it is NP-hard to decide whether the nucleolus payoff of any
particular player is 0, and therefore approximating the nucleolus payoffs up to
any constant factor is NP-hard. Nevertheless, one can attempt to find an {\em
additive} approximation to the nucleolus, i.e., for a given error bound
$\delta>0$, find a vector $\x$ such that $|\eta_i-x_i|\le\delta$ for $i=1,
\dots, n$.  This can be useful in situations when the agents' weights cannot be
assumed to be polynomially bounded with respect to $n$, e.g., in the multiagent
settings where the weights correspond to agents' resources.  We are currently
investigating several approaches to designing additive approximation algorithms
for the nucleolus.

\bibliographystyle{abbrv}
\bibliography{general,core-nucl}

\begin{contact}
Edith Elkind \\
Intelligence, Agents, Multimedia group \\
School of Electronics and Computer Science \\
University of Southampton\\
Southampton, SO17 1BJ, United Kingdom\\
\email{ee@ecs.soton.ac.uk}
\end{contact}

\begin{contact}
Dmitrii Pasechnik \\
Division of Mathematical Sciences\\
School of Physical and Mathematical Sciences\\
Nanyang Technological University\\
21 Nanyang Link\\
Singapore 637371\\
\email{dima@ntu.edu.sg}
\end{contact}

\end{document}